\numberwithin{equation}{section}
\theoremstyle{plain}
\newtheorem{theorem}{Theorem}[section]
\newtheorem{corollary}[theorem]{Corollary}
\newtheorem{conjecture}[theorem]{Conjecture}
\newtheorem{proposition}[theorem]{Proposition}
 \newtheorem*{thm*}{Theorem}
\theoremstyle{definition} 
\newtheorem{remark}[theorem]{Remark} 
\newtheorem{definition}[theorem]{Definition}
 \newtheorem{example}[theorem]{Example}
\setlist[enumerate]{leftmargin=.5in}
\setlist[itemize]{leftmargin=.5in}
\newcommand\R{\mathbb{R}}
\newcommand\Z{\mathbb{Z}}
\newcommand{\SO}{\mathsf{SO}}
\renewcommand{\O}{\mathsf{O}}
\DeclareMathOperator{\Sym}{Sym}
\newcommand\V{\mathcal{V}}
\renewcommand\t{\tau}
\newcommand{\rev}[1]{{\color{black}{#1}}}
\begin{document}

\begin{frontmatter}

%% Title, authors and addresses

%% use the tnoteref command within \title for footnotes;
%% use the tnotetext command for theassociated footnote;
%% use the fnref command within \author or \address for footnotes;
%% use the fntext command for theassociated footnote;
%% use the corref command within \author for corresponding author footnotes;
%% use the cortext command for theassociated footnote;
%% use the ead command for the email address,
%% and the form \ead[url] for the home page:
%% \title{Title\tnoteref{label1}}
%% \tnotetext[label1]{}
%% \author{Name\corref{cor1}\fnref{label2}}
%% \ead{email address}
%% \ead[url]{home page}
%% \fntext[label2]{}
%% \cortext[cor1]{}
%% \affiliation{organization={},
%%             addressline={},
%%             city={},
%%             postcode={},
%%             state={},
%%             country={}}
%% \fntext[label3]{}

\title{The beltway problem over orthogonal groups}

%% use optional labels to link authors explicitly to addresses:
%% \author[label1,label2]{}
%% \affiliation[label1]{organization={},
%%             addressline={},
%%             city={},
%%             postcode={},
%%             state={},
%%             country={}}
%%
%% \affiliation[label2]{organization={},
%%             addressline={},
%%             city={},
%%             postcode={},
%%             state={},
%%             country={}}

\author{Tamir Bendory,  Dan Edidin, and Oscar Mickelin}

%\affiliation{organization={},%Department and Organization
 %           addressline={}, 
  %          city={},
   %         postcode={}, 
    %        state={},
     %       country={}}

\begin{abstract}
The classical beltway problem entails recovering a set of points from their unordered pairwise distances on the circle. 
This problem can be viewed as a special case of the crystallographic phase retrieval problem of recovering a sparse signal from its periodic autocorrelation.
Based on this interpretation, and motivated by cryo-electron microscopy, we suggest a natural generalization to orthogonal groups: recovering a sparse signal, up to an orthogonal transformation, from its autocorrelation over the orthogonal group.
If the support of the signal is collision-free,  we bound the number of solutions to the beltway problem over orthogonal groups,
and prove that this bound is exactly one when the support of
the signal is radially collision-free (i.e., the support points have distinct magnitudes).  
We also prove that if the pairwise products of the signal's weights are distinct,  then the autocorrelation determines the signal uniquely, up to an orthogonal transformation.
We conclude the paper by considering binary signals 
and show that in this case, the collision-free condition need
not be sufficient to determine signals up to orthogonal transformation.  
\end{abstract}

%%Graphical abstract
%\begin{graphicalabstract}
%\includegraphics{grabs}
%\end{graphicalabstract}

%%Research highlights
%\begin{highlights}
%\item Research highlight 1
%\item Research highlight 2
%\end{highlights}

%\begin{keyword}
%% keywords here, in the form: keyword \sep keyword

%% PACS codes here, in the form: \PACS code \sep code

%% MSC codes here, in the form: \MSC code \sep code
%% or \MSC[2008] code \sep code (2000 is the default)

%\end{keyword}

\end{frontmatter}

%% \linenumbers

\section{Introduction}\label{sec:introduction}
The beltway problem consists of recovering $k$ points $t_1, \ldots, t_k$ on the circle from knowing only the unordered set of their pairwise distances, measured along the circle~\cite{skiena1990reconstructing,lemke2003reconstructing}.
%This problem arises in a variety of applications, such as musical theory~\cite{toussaint2004geometry,toussaint2010computational} and cyclopeptide sequencing~\cite{fomin2015reconstruction,baral2017cycloant}. 
Clearly, the translation of any solution to the beltway problem by a global rotation or reflection is also a solution. 
Two solutions are called \emph{equivalent} if they are related by a rotation and reflection. 
Two sets of points with identical difference sets are termed \emph{homometric}.
Generally, a given set of distances can give rise to non-equivalent solutions. Namely, there may exist homometric sets which are not equivalent. 
In~\cite{skiena1990reconstructing}, the maximum possible number $S(k)$ of  non-equivalent and homometric sets, over all sets of $k$ points on the unit circle, was bounded 
by $ \exp{\left(2^\frac{\ln k}{\ln \ln k}\right)} \leq S(k) \leq \frac{1}{2}k^{k-2}$.

\paragraph{Crystallographic phase retrieval} 
The beltway problem originally arose in X-ray crystallography because of its connection to the phase retrieval problem. The (discrete)  phase retrieval problem entails recovering a signal $x\in\R^n$, up to a sign, cyclic shift and reflection, from the magnitudes of its Fourier transform.
It can be readily seen that this problem is equivalent to recovering $x$ from its periodic autocorrelation~\cite{bendory2022algebraic},
\begin{equation}\label{eq:ac}
	A_\ell(x)= \sum_{i=0}^{n-1}x_ix_{(i+\ell)\bmod n}, \quad \ell=0,\ldots,n-1. 
\end{equation}
Note that the autocorrelation~\eqref{eq:ac} is the second moment of $(g\cdot x),$ where $g$ is a uniformly distributed element of the group of circular shifts $\Z_n$; this observation plays a pivotal role in this paper. 

If $x \in \R^n$ is a binary signal, then the periodic
autocorrelation $A(x)$ is determined by the cyclic difference set of the support of~$x$, and two binary signals have the same autocorrelation if and only if their supports have the same cyclic difference sets. Thus, recovering a binary signal from its periodic autocorrelation~\eqref{eq:ac} reduces to the beltway problem on the vertices of the regular $n$-gon inscribed in the circle. This problem is important since it serves as an approximation of the physical model of X-ray crystallography technology, and thus can be thought of as a special case of the \emph{crystallographic phase retrieval problem}~\cite{elser2017complexity,elser2018benchmark}.
The general crystallographic phase retrieval problem is the problem
of recovering a signal $x \in \R^n$, up to a cyclic shift and reflection, from its second moment, with the classical beltway problem corresponding to binary signals.
If the entries of the signal lie in a small alphabet (which in X-ray crystallography corresponds to different types of atoms~\cite{elser2018benchmark}), then recovering the signal from its autocorrelation is equivalent to checking if several partitions of its support have the same difference sets~\cite{bendory2023finite}. 
 
\paragraph{Existing uniqueness results for phase retrieval} 
The phase retrieval problem is typically ill-posed even for binary signals, and only very recently have uniqueness results been obtained. 
A subset of the authors of this paper established a conjecture that a sparse signal with generic entries is uniquely determined by its periodic autocorrelation as long as $k\leq n/2$~\cite{bendory2020toward}.
In particular,~\cite{bendory2020toward} provides a computational test to check the uniqueness for any pair $(k,n)$. However, because of its heavy computational burden, the conjecture was affirmed only up to $n=9$. 
In~\cite{edidin2023generic}, it was shown that if the signal is sparse with respect to a generic basis in $\R^n$, then $k\leq n/2$ guarantees unique recovery for generic signals, and $k\leq n/4$ suffices to determine all signals.
In~\cite{ghosh2022sparse}, it was shown that \emph{symmetric} signals can be recovered from their periodic autocorrelation for $k=O(n/\log^5 n)$. 
The computational aspects of the beltway problem were studied in~\cite{skiena1990reconstructing,lemke2003reconstructing,duxbury2022unassigned,disser2017geometric,huang2021reconstructing}.

\paragraph{Autocorrelations over orthogonal groups}
The autocorrelation function~\eqref{eq:ac} is the second moment of circular shifts of the signal, where the circular shifts are drawn from a uniform distribution.  
Analogously, one can define an autocorrelation over a compact group $G$ as the expectation of all the products of two entries of $g\cdot x$, where $g$ is drawn from a uniform distribution over $G$. 
The beltway problem over orthogonal groups is that of recovering a discretely supported signal from its autocorrelation over an orthogonal group. 
We formalize the problem rigorously in Section~\ref{sec:orthogonal_beltway} and motivate it by single-particle cryo-electron microscopy: a prominent technology in structural biology to elucidate the spatial structure of biological molecules~\cite{bendory2020single}.

\paragraph{Collision-free support} 
In the classical beltway problem, the points $t_1, \ldots, t_k$ are termed  {\em collision free} if each non-zero distance in the difference set of the points appears with multiplicity one.
If we interpret the beltway problem as recovering a sparse signal from its periodic autocorrelation, then a signal is collision-free if there are no repeated differences in the signal's support.
We generalize this definition to the beltway problem over orthogonal groups, see Definition~\ref{def:collision}.
The collision-free hypothesis is essential to our analysis because, otherwise, it is not necessarily possible to determine the size of the support set from the autocorrelation, except in the binary case when all non-zero entries of the signal are one.

\paragraph{Main contributions}  
The contribution of this paper is three-fold. First, Section~\ref{sec:orthogonal_beltway} formalizes the beltway problem over orthogonal groups.
Second, we prove that the $\O(n)$-orbit of a sparse signal can be uniquely recovered from its autocorrelation if the signal satisfies the radially collision-free condition (i.e., the support points are on distinct spheres). If the signal satisfies an extension of the classical collision-free condition,  we bound the number of orbits that correspond to a given autocorrelation.
Section~\ref{sec.mainresult} presents the main results, which are proved in Section~\ref{sec:proof}. 
The third contribution, in Section~\ref{sec:binary}, is a detailed discussion on binary signals whose support is collision free but not radially collision-free. This case is more subtle. We derive conditions where the binary signal cannot be determined from its second moment, and conjecture when it can be determined. We also discuss the relationship with the turnpike problem. 

\section{The beltway problem over orthogonal groups}\label{sec:orthogonal_beltway}

\subsection{Problem formulation} \label{sec.problem}
The formulation of the classical beltway problem as that of recovering a binary signal from its periodic autocorrelation leads to a natural generalization over orthogonal groups. 
Let $x$ be a signal on $\R^n$; that is, a function $\R^n \to \R.$ The second moment of the translated signal $g\cdot x$ with respect to the uniform (Haar) measure over $\O(n)$ is the function
on $(\R^n)^2$
\begin{equation} \label{eq:ac_groups}
	\begin{split}
		m_2(x)(\t_1,\t_2) &= \int_{\O(n)} (g\cdot x)(\t_1)(g\cdot x)(\t_2)dg.
	\end{split}
\end{equation} 
Note that $m_2$ is $\O(n)$-invariant, namely, $m_2(g\cdot x)=m_2(x)$
for any $g \in \O(n)$, so we can view $m_2(x)$ as a function
$(\R^{n})^2/\O(n) \to \R$.
This, in turn, implies that we can only expect to determine 
the $\O(n)$-orbit of $x$ from its second moment.
In this case,  the second moment is typically referred to as the \emph{autocorrelation}. 

\rev{More generally, it is possible to define the auto-correlation of a (compactly supported) distribution, such as a Dirac delta function, which will produce a distribution supported on
$G$-invariant (compact) subsets of $(\R^{n})^2$ (cf. \cite[Section 2.1]{bendory2023autocorrelation}). In detail, given a collection of points $t_1, \ldots, t_k \in \R^n$ and weights
$w_1, \ldots , w_k \in \R$, we define a $k$-sparse signal on $\R^n$
as
\begin{equation}\label{eq:delta}
	x = \sum_{i=1}^k w_i \delta_{t_i},
\end{equation}
where $\delta_t$ is a point mass located at $t\in \R^n$, represented by the Dirac delta function.  
With a slight abuse of notation, we occasionally refer to signals of the form~\eqref{eq:delta}
as \emph{$\delta$-functions}.
There is an action of $\O(n)$ on $\delta$-functions of the form~\eqref{eq:delta}, given by
$g \cdot x = \sum_{i=1}^k w_i \delta_{g\cdot t_i}.$

In this case, we define the autocorrelation $m_2(x)$ as a distribution
on $(\R^{n})^2$, supported on the $\O(n)$-orbits
$\overline{(t_{i}, t_{j})}:= \{(g\cdot t_i, g\cdot t_j) \,|\, g\in G\}$, where $i,j \in \{1,\ldots,k\}$ as follows.
Observe that for $g \in \O(n)$, the product
$(g \cdot \delta_{t_i})(g \cdot \delta_{t_j})$ is the point mass $\delta_{(g \cdot t_i, g \cdot t_j)}$ supported at $(g \cdot t_i, g \cdot t_j)$ in $(\R^n)^2$; i.e., the distribution characterized by the property that $$\int_{(\R^n)^2} f(\tau_1, \tau_2) \delta_{(g\cdot t_1, g \cdot t_2)}(\tau_1,\tau_2) \;d\tau_1 d\tau_2 = f(g\cdot t_i, g\cdot t_j)$$ for any function $f$ on $(\R^n)^2$. 
We then define the 
integral of these point masses over $\O(n)$
$$\mu_{t_i,t_j}(\tau_1,\tau_2) = \int_{\O(n)} (g \cdot \delta_{t_i})(\tau_1) (g\cdot \delta_{t_j})(\tau_2) \;dg$$ to be the distribution with support the orbit $\overline{(t_i,t_j)}$ 
characterized by property that if $f$ is a function on $(\R^n)^2$ then 
$$\int_{(\R^n)^2} f(\tau_1, \tau_2) \mu_{t_i,t_j}(\tau_1,\tau_2) \;d\tau_1 d\tau_2 =
\int_{\O(n)} f(g\cdot t_i, g\cdot t_j)\;dg.$$
The autocorrelation is then the distribution 
\begin{equation} \label{eq.delta_autocorrelation}
m_2(x)(\tau_1,\tau_2) = \sum_{i, j =1}^k w_i w_j \mu_{t_i,t_j}(\tau_1,\tau_2).
\end{equation}
}

We are now ready to define the beltway problem over orthogonal groups.

\begin{definition}
    The beltway problem over an orthogonal group $\O(n)$ is the task of determining the $\O(n)$-orbit of a $\delta$-function of the form~\eqref{eq:delta} in $\R^n$ from its autocorrelation~\eqref{eq.delta_autocorrelation}.
\end{definition}

In the next section, we state our main results. 
We bound the number of orbits with the same second moment, and then derive conditions for there to be a one-to-one correspondence between the second moment and an orbit.
The results are proved in Section~\ref{sec:proof}. 
Before that, we motivate the beltway problem over orthogonal groups by relating it to \rev{the} single-particle cryo-electron microscopy technology.

\subsection{Motivation: Single-particle cryo-electron microscopy}
The goal of single-particle cryo-electron microscopy is to estimate a 3-D function that represents the electrostatic potential of a biological molecule. The measurements 
are tomographic projections of the molecule, each rotated by a random element of the group of 3-D rotations $\SO(3)$~\cite{bendory2020single}. 
While tomographic projection is not a group action, it was shown that the second moment of cryo-EM is equivalent to~\eqref{eq:ac_groups} with $G=\SO(3)$~\cite{kam1980reconstruction,bendory2022sample}. 
Remarkably, it was proven that, in the high noise regime, the minimal number of samples required for accurate estimation (i.e., the sample complexity) is proportional to the lowest order moment that determines the molecular structure~\cite{abbe2018estimation}. 
Moreover, it was shown that generic molecular structures are determined only by the third moment, implying that the sample complexity scales rapidly~\cite{perry2019sample,bendory2017bispectrum,bandeira2023estimation}. Thus, it is essential to identify classes of molecular structures that can be recovered from lower-order moments. In this context, \cite{bendory2022sample,bendory2023autocorrelation} have shown that if a molecular structure can be represented with only a few coefficients under some basis, then it can be recovered from the second moment, and hence fewer samples are necessary for recovery (i.e., an improved sample complexity). 
These findings were recently extended to any semi-algebraic prior, including sparsity and deep generative models~\cite{bendory2023phase,bendory2024transversality}.
Using only the second moment also reduces the computational burden, and thus it also inspired designing a new class of algorithms, based on the method of moments~\cite{kam1980reconstruction,sharon2020method,levin20183d,bendory2023autocorrelation}. 

\section{Main result} \label{sec.mainresult}

Before stating our main result, we introduce two definitions: an extension of the classical collision-free condition from the circle to $\R^n$, and the \emph{radially collision-free} condition.  The condition that a set $S = \{t_1, \ldots , t_k\}$ of points
in $\R^n$ is radially collision-free is the
assertion that all points of $S$ have distinct magnitudes, a hypothesis
which was used to prove a similar result for cryo-EM in~\cite{bendory2023autocorrelation}.
To motivate the extension of the collision-free definition, note that a periodic signal in $\R^n$ can be viewed as a discrete function on $S^1$, 
supported at a subset $S$ of the  $n$-th roots of unity. 
On the circle, this means that if
$(t_i, t_j)$ and $(t_{\ell}, t_{m})$ are two pairs
of points in $S$, then there is no $g\in \O(2)$ such that
$g\cdot \{t_i, t_j\} = \{t_\ell, t_m\}$, unless $\{i,j\} = \{\ell,m\}$.
The definition below extends this idea from $\O(2)$ to $\O(n)$.

\begin{definition} \label{def:collision}
	Let $S =\{t_1, \ldots , t_k\} \subset \R^n$ be a set of points.
	\begin{enumerate}
		\item We say that
		$S$ is {\em collision-free} if for every pair of two-element
		subsets $\{t_i, t_j\}$ and $\{t_\ell, t_m\}$, $g\cdot \{t_i, t_j\} = \{t_\ell, t_m\}$
		for some $g \in \O(n)$ if and only if $\{t_i, t_j\} = \{t_\ell, t_m\}$.
		
		\item We say that $S$ is {\em radially collision-free}
		if for every $t_i, t_j \in S$, $t_i = g\cdot t_j$ for
		some $g \in \O(n)$ if and only if $t_i =t_j$.
	\end{enumerate}
\end{definition}
Clearly, any radially collision-free set is collision-free, but the converse need not be true.
Note also that the radially collision-free condition has no analog in the classical beltway problem, where all points lie on a circle.

Our main result states that for any collision-free $\delta$-function of the form~\eqref{eq:delta}, there are a finite
number of $\O(n)$-orbits of $\delta$-functions with the same
second moment as $x$. Moreover, we provide an explicit upper bound 
for the number of possible $\O(n)$-orbits with the same second moment.
Notably, this bound implies that if the support of $x$ is radially collision-free, 
then the second moment uniquely determines the $\O(n)$-orbit of $x$. 

To state our main result, we introduce some notation. Suppose that $x$ is a $\delta$-function of the form~\eqref{eq:delta}.
Let $\{z_1, \ldots z_q\}$ be the set of distinct magnitudes of the vectors
$t_i$, and suppose that $r_p$ of the vectors have magnitude $z_p$
with $\sum_p r_p = k$. In particular, the support of $x$ is radially collision-free 
if and only if $q = k$, so all $r_p = 1$.

\begin{theorem} \label{thm.secondmoment}
	If $k \geq 3$, the number of $\O(n)$-orbits of $\delta$-functions of the form~\eqref{eq:delta} with the same second moment
	as $x = \sum_{i =1}^k w_i \delta_{t_i}$, whose support is collision-free,
	is at most
	\begin{equation} \label{eq.bound}
		\prod_{r_p \geq 2} \frac{\binom{r_p}{2}!}{r_p!}\prod_{a < b \leq q} (r_a r_b)!
	\end{equation}
	In particular, if the support of $x$ is radially collision-free (i.e., all
	$r_p =1$),  
	then the second moment determines $x$ up to a translation by an element
	of $\O(n)$.
\end{theorem}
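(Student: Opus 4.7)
My plan is to decompose $m_2(x)$ according to the $\O(n)$-orbits in $(\R^n)^2$ and count the labeled Gram configurations consistent with the resulting orbit data. Since the $\O(n)$-orbits in $(\R^n)^2$ are parametrized by the triple $(|\tau_1|,|\tau_2|,\langle\tau_1,\tau_2\rangle)$, each distribution $\mu_{t_i,t_j}$ appearing in~\eqref{eq.delta_autocorrelation} is supported on a single such orbit, and the collision-free hypothesis ensures that these orbits are pairwise distinct as $\{i,j\}$ ranges over unordered pairs. I would first restrict $m_2(x)$ to the diagonal $\{\tau_1=\tau_2\}$, which kills every cross term ($i\ne j$) and leaves $\sum_i w_i^2\,\mu_{t_i,t_i}$ supported on the diagonal copies of the spheres of radii $|t_i|$; this immediately identifies the distinct magnitudes $z_1,\ldots,z_q$ as well as the radial sums $\sigma_p=\sum_{i\in\text{class }p}w_i^2$.

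Next I would use the off-diagonal part to recover the multiplicities and orbit coefficients. By collision-freeness, counting distinct orbits supported at radial data $(z_a,z_b)$ gives $r_ar_b$ orbits for $a<b$ and $\binom{r_p}{2}$ orbits for $a=b=p$, determining every $r_p$; from each such orbit one reads off both the inner product $c_{ij}=\langle t_i,t_j\rangle$ and, via the coefficient of $\mu_{t_i,t_j}$, the weight product $w_iw_j$. This yields, for each pair of magnitude classes $(a,b)$, an unordered data set $D_{ab}$ of $(c,w)$-pairs with pairwise distinct $c$-values. To reconstruct a candidate configuration I fix an arbitrary labeling $t_{p,1},\ldots,t_{p,r_p}$ of the vectors in each magnitude class; a labeled Gram matrix then comes from a bijection $D_{ab}\to\{(i,j):i\in\text{class }a,\,j\in\text{class }b\}$ for each $a<b$ (at most $(r_ar_b)!$ choices) together with a bijection $D_{pp}\to\{(i,j):1\le i<j\le r_p\}$ for each $p$ with $r_p\ge 2$ (at most $\binom{r_p}{2}!$ choices). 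Each such Gram matrix determines the support up to $\O(n)$, and the products $w_iw_j$ together with the radial sums $\sigma_p$ constrain the weights. Two labeled configurations give the same $\O(n)$-orbit of $x$ precisely when they differ by the within-class relabeling action of $\prod_p S_{r_p}$, so dividing by $\prod_p r_p!$ yields the stated bound $\prod_{a<b}(r_ar_b)!\cdot\prod_{r_p\ge 2}\binom{r_p}{2}!/r_p!$.

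In the radially collision-free case every $r_p=1$, the product over $r_p\ge 2$ is empty, each $D_{ab}$ is a singleton, and the unique labeled Gram matrix forces a unique $\O(n)$-orbit of $x$. The main subtlety I expect is confirming that the labeling count is a genuine upper bound on the number of orbits and not an undercount: some bijections may fail to produce a positive semi-definite Gram matrix realizable in $\R^n$, and the assigned weight products $w_iw_j$ must be globally consistent with the radial sums $\sigma_p$ read off the diagonal. Both constraints can only prune the labeling count, so the bound survives; and in the radially collision-free case global consistency is automatic because $x$ itself realizes the unique labeled data.
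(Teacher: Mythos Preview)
Your approach is essentially the same as the paper's. The paper first packages the extraction of orbit data as a separate proposition (Proposition~\ref{prop.equiv}): from the collision-free hypothesis, $m_2(x)$ determines the unordered set of triples $\mathcal{T}(A)=\{(A_{ii},A_{jj},A_{ij})\}_{i<j}$ of the Gram matrix $A=X^TX$ together with the weight products $w_iw_j$, exactly as you obtain by reading off the $\O(n)$-orbit data $(|\tau_1|,|\tau_2|,\langle\tau_1,\tau_2\rangle)$ and coefficients from the support of the distribution. The subsequent counting is identical: order the diagonal by magnitude, partition indices into magnitude classes $\mathcal{P}_1,\ldots,\mathcal{P}_q$, count the $\binom{r_p}{2}!$ and $(r_ar_b)!$ ways to fill each block of a candidate Gram matrix $B$, and then divide by $\prod_p r_p!$ for within-class relabelings.

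One small point worth making explicit in both your write-up and the paper's: the division by $\prod_p r_p!$ gives an \emph{upper} bound on the number of orbits only because the within-class relabeling action on candidate Gram matrices is free. This follows from collision-freeness (distinct off-diagonal entries within each block force any permutation fixing the matrix to be the identity when $k\geq 3$), so the step is valid, but you should state it. Your remarks about the PSD and weight-consistency constraints only pruning the count are correct and match the paper's closing remark.
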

Importantly, Theorem~\ref{thm.secondmoment} is independent of the non-zero weights $w_i$. Thus, it holds for the binary case when the weights are all ones---as in the classical beltway problem---as well as when the weights are either generic~\cite{bendory2020toward} or drawn from a finite alphabet~\cite{bendory2023finite}.
When the weights are sufficiently generic, we can derive a stronger result: 
the orbit of any signal with collision-free support (not necessarily radially collision-free)   is determined by the second moment. 

\begin{corollary} \label{cor.distinctweights}
  Suppose that a $\delta$-function of the form~\eqref{eq:delta} has a  collision-free support.
  If the pairwise products of the weights $\{w_i w_j\}_{i < j}$ are all
  distinct, then the second moment determines the $\O(n)$-orbit of $x$.
\end{corollary}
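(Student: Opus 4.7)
My plan is to refine the argument of Theorem~\ref{thm.secondmoment} by using the coefficient carried by each orbit to remove the labeling ambiguity that produces the factorial factors in~\eqref{eq.bound}.

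First, the collision-free hypothesis implies that the off-diagonal orbits $\overline{(t_i,t_j)}\subset(\R^n)^2$ indexed by distinct unordered pairs $\{i,j\}$ are pairwise disjoint, so from $m_2(x)$ one intrinsically recovers, for each such orbit, the value of the coefficient $w_iw_j$. By the proof of Theorem~\ref{thm.secondmoment}, any $x'=\sum_{i=1}^k w_i'\delta_{t_i'}$ with $m_2(x')=m_2(x)$ satisfies $t_i'=g\cdot t_{\sigma(i)}$ for some $g\in\O(n)$ and some permutation $\sigma\in S_k$, and matching coefficients on the common orbits then forces $w_i'w_j'=w_{\sigma(i)}w_{\sigma(j)}$ for every $i<j$.

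Second, I invoke the distinct-products hypothesis: since the map $\{i,j\}\mapsto w_iw_j$ is injective on unordered pairs, the identification of each off-diagonal orbit with a unique unordered pair $\{i,j\}$ is forced. Consequently $\sigma$ must fix every unordered pair $\{i,j\}\subset\{1,\ldots,k\}$, and for $k\geq 3$ the action of $S_k$ on unordered pairs is faithful, so $\sigma=\mathrm{id}$. Hence $x'=g\cdot x$ (modulo the overall sign $w\mapsto -w$, which preserves any quadratic moment), and the $\O(n)$-orbit of $x$ is determined.

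The main obstacle is verifying that the permutations $\sigma$ allowed by the orbit-level data of Theorem~\ref{thm.secondmoment} are genuinely rigidified by the coefficient matching: concretely, within any magnitude class $\{t_i:|t_i|=z_p\}$, interchanging two indices $i\leftrightarrow j$ would alter the coefficient $w_iw_\ell$ to $w_jw_\ell$ for any third index $\ell$, violating distinctness whenever $w_i\neq w_j$. This observation, together with the faithfulness noted above, is precisely what collapses each factorial factor in~\eqref{eq.bound} to one.
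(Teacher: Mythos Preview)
Your argument contains a genuine gap in the second paragraph. You assert that the proof of Theorem~\ref{thm.secondmoment} yields, for any $x'=\sum w_i'\delta_{t_i'}$ with $m_2(x')=m_2(x)$, a single $g\in\O(n)$ and a permutation $\sigma\in S_k$ with $t_i'=g\cdot t_{\sigma(i)}$. But Theorem~\ref{thm.secondmoment} does not prove this. What its proof establishes is only that the \emph{sets} of Gram triples agree, $\mathcal{T}(X^TX)=\mathcal{T}(Y^TY)$; the bound~\eqref{eq.bound} then counts how many symmetric matrices $B$ could have this triple set, and those matrices are in general \emph{not} of the form $P^TX^TXP$ for a permutation $P$. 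Example~\ref{ex.piccard} exhibits exactly this phenomenon: the matrices $A$ and $B$ there satisfy $\mathcal{T}(A)=\mathcal{T}(B)$ yet $B$ cannot be obtained from $A$ by simultaneously permuting rows and columns. So there is no a~priori $\sigma$ for you to rigidify, and your subsequent argument (forcing $\sigma=\mathrm{id}$ via distinct products and faithfulness of the $S_k$-action on pairs) is operating on an object that has not been shown to exist.

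The repair is to use the distinct-products hypothesis \emph{before} trying to extract a permutation, as the paper does. Since the off-diagonal orbits are disjoint and each carries the coefficient $w_iw_j$, distinctness of these products gives directly a bijection between unordered pairs $\{i,j\}$ for $x$ and unordered pairs for $x'$ that \emph{simultaneously} matches the weight product and the orbit (hence the Gram triple $(A_{ii},A_{jj},A_{ij})$). One checks that this pair bijection is induced by a relabeling of the support of $x'$---for instance because the products $\{w_iw_j\}_{j\neq i}$ determine $w_i$ up to a global sign, so after reordering one has $w_i'=\pm w_i$ and then the orbit attached to the value $w_iw_j$ forces $\overline{(s_i,s_j)}=\overline{(t_i,t_j)}$ for every pair. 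This yields $Y^TY=X^TX$ entrywise, hence $Y=gX$. Your third paragraph gestures at the obstacle but does not supply this step; as written, the argument assumes precisely what needs to be proved.
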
 

Note that if the weights $w_i$  take negative values, then the second moment is invariant to sign, namely, $x$ and $-x$ result in the same second moment. Thus, in this case, $x$ is at best determined up to an element of $\O(n)\times \mathbb{Z}_2$.

\section{Proofs} \label{sec:proof}

\subsection{Proof of Theorem \ref{thm.secondmoment}} 

To prove Theorem \ref{thm.secondmoment},  we use the representation-theoretic
analysis of the second moment developed in~\cite{bendory2022sample}.
The set $\V_{n,k}$ of $\delta$-functions supported at 
$k$-points can be identified with the quotient algebraic variety
$U(n,k)/\Sigma_k$,
where $U(n,k)$ is the set of $k$-tuples
$\left((w_1, t_1), \ldots ,(w_k,t_k)\right)$, the $t_k$ are distinct points in $\R^n$, none of the weights $w_i$ are zero, and
$\Sigma_k$ is the symmetric group of 
permutations of $\{1, \ldots , k\}$. 

The quotient map $\pi \colon U_{n,k} \to \V_{n,k}$ 
is given explicitly by the formula
$$\left((w_1, t_1), \ldots ,(w_k,t_k)\right)\mapsto \sum_{i=1}^k w_i \delta_{t_i}.$$
The set $U(n,k)$ is an $\O(n)$-invariant Zariski open set in the representation $V_{n,k} = (\R \times \R^n)^k$ of $\O(n)$, where $g\in O(n)$ acts by the rule
$g \cdot \left((w_1, t_1), \ldots ,(w_k,t_k)\right)
= \left((w_1, g\cdot t_1), \ldots ,(w_k,g \cdot t_k)\right).$
With this action, the quotient map $\pi \colon U_{n,k} \to \V_{n,k}$
is $\O(n)$ equivariant, meaning that 
$\pi(g \cdot \left((w_1, t_1), \ldots ,(w_k,t_k)\right) = g \cdot 
\pi\left((w_1, t_1), \ldots ,(w_k,t_k)\right).$

Our strategy for proving Theorem \ref{thm.secondmoment} is to use
the $\O(n)$-equivariance of the map $\pi$ to compute
the second moment of $\delta = \pi\left((w_1, t_1), \ldots ,(w_k,t_k)\right)$
in terms of the second moment of the vector 
$\left((w_1, t_1), \ldots ,(w_k,t_k)\right) \in V_{n,k}$.
To simplify notation, we denote elements of $V_{n,k}$ by pairs
$(W,X)$, where $W$ is the $1\times k$ matrix of
weights $(w_1, \ldots , w_k)$ and
$X=(t_1 \ldots t_k)$ is the $n \times k$ matrix whose columns are the vectors
$t_1, \ldots , t_k$. 
The vector space $V_{n,k}$ decomposes as a representation of $\O(n)$, 
$V_{n,k} = V_0^k + V_1^k$, where $V_0$ is the trivial
representation (i.e., $\O(n)$ acts trivially) and $V_1$ is 
the {\em defining representation} of $\O(n)$, corresponding
to $\O(n)$ acting as rigid motions of $\R^n$.
Under this decomposition, the matrix
$W$ corresponds to the projection to $V_0^k$,  and the matrix $X$ corresponds to the projection to $V_1^k$.

By~\cite[Theorem 2.3]{bendory2022sample}, the second moment
of a pair $(W,X)$ is equivalent to the pair of symmetric $k \times k$ matrices $(W^T W, X^T X)$.
Thus, we can identify the second moment on $V_{n,k}$ 
as a function $m_2 \colon V_{n,k} \to \Sym^k \R \times \Sym^k \R, (W,X) \mapsto (W^TW, X^TX)$.
By \cite[Theorem 2.3]{bendory2022sample}, the ambiguity
group of the second moment on the representation $V_{n,k} = V_0^k \oplus V_1^k$ is the group $O(1) \times O(n)$,
since $V_0$ and $V_1$ have dimensions $1$ and $n$, respectively. (Note that in our formulation the ambiguity groups
are orthogonal groups rather than unitary groups because we work with real representations.)
In other words, $m_2(W',X') = m_2(W,X)$ if and only if $W' = \pm W$ and $X' = g X $ for some $g \in \O(n)$.

Let $x = \sum_{i=1}^k w_i\delta_{t_i}$ be a $\delta$-function with support
$\{t_1, \ldots , t_k\}$, let $X = (t_1\; \ldots \; t_k)$ (with respect
to any ordering of the support), let $A = X^T X$ be the corresponding
Gram matrix, and let $\mathcal{T}(A)$ be the set of triples $\{(A_{ii}, A_{jj}, A_{ij})\}_{i < j}$.
\begin{proposition} \label{prop.equiv}
  If $x= \sum_{i=1}^k w_i \delta_{t_i} \in \V_{n,k}$
is collision free, then the second moment $m_2(x)(\tau_1,\tau_2)$ determines $\mathcal{T}(A)$ \rev{and the products of the weights $\{w_iw_j\}_{1 \leq i,j \leq k}$}.
 \end{proposition}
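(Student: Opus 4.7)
The plan is to analyze $m_2(x)$ orbit-by-orbit, using the explicit formula \eqref{eq.delta_autocorrelation}, which writes $m_2(x) = \sum_{i,j=1}^k w_iw_j\,\mu_{t_i,t_j}$ with each $\mu_{t_i,t_j}$ supported on the diagonal $\O(n)$-orbit $\overline{(t_i,t_j)} \subset (\R^n)^2$. A standard fact is that two pairs in $(\R^n)^2$ lie in the same orbit under the diagonal $\O(n)$-action if and only if they share the same Gram triple $(\|v_1\|^2,\|v_2\|^2,\langle v_1,v_2\rangle)$; hence the orbit $\overline{(t_i,t_j)}$ is completely classified by $(A_{ii},A_{jj},A_{ij})$, and distinct orbits are disjoint subsets of $(\R^n)^2$. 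Consequently $m_2(x)$, restricted to its support, decomposes as a sum of scalar multiples of canonical measures on such orbits, and it suffices to extract the list of orbits that appear and their coefficients.

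The next step is to separate the diagonal contributions $i=j$ (which lie on the diagonal of $(\R^n)^2$ and correspond to triples $(A_{ii},A_{ii},A_{ii})$) from the off-diagonal contributions $i \neq j$ (which miss the diagonal, since $t_i \neq t_j$). The collision-free hypothesis translates exactly into the statement that for distinct unordered pairs $\{i,j\} \neq \{\ell,m\}$ the orbit-unions $\overline{(t_i,t_j)} \cup \overline{(t_j,t_i)}$ and $\overline{(t_\ell,t_m)} \cup \overline{(t_m,t_\ell)}$ are disjoint. Thus each unordered pair $\{i,j\}$ with $i \neq j$ contributes disjointly to $m_2(x)$: either as a single orbit with coefficient $2w_iw_j$ (when $\|t_i\| = \|t_j\|$, forcing the two swapped orbits to coincide) or as two swapped orbits each with coefficient $w_iw_j$ (when $\|t_i\| \neq \|t_j\|$). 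Reading off orbit parameters and coefficients (halving in the coincidence case) produces the multiset $\mathcal{T}(A)$, together with the pairwise products $\{w_iw_j\}_{i<j}$, each labeled by the triple of its pair.

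The last step is to recover the squares $w_i^2$ and assemble the full multiset $\{w_iw_j\}_{1 \leq i,j \leq k}$. When $k \geq 3$, any three indices $i,j,\ell$ give $w_i^2 = (w_iw_j)(w_iw_\ell)/(w_jw_\ell)$; the triples in $\mathcal{T}(A)$, through their shared magnitude entries, allow one to match pairwise products to pairs sharing an index. For $k \leq 2$, the squares are either read directly from distinct diagonal orbits (when $\|t_1\| \neq \|t_2\|$) or recovered by solving the quadratic whose coefficients are $w_1^2+w_2^2$ (from the single diagonal orbit) and $w_1w_2$ (from the off-diagonal orbit). The main obstacle is the bookkeeping when several $t_i$ share a magnitude: diagonal orbits then carry only aggregate sums $\sum_{\|t_i\|=z_p} w_i^2$ and the triples share magnitude entries, so one must verify carefully that the collision-free hypothesis is still enough to disentangle which pairwise products originate from pairs sharing a given index, without appealing to the stronger radially-collision-free condition used later in the paper.
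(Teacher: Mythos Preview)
Your approach matches the paper's --- decompose $m_2(x)$ via \eqref{eq.delta_autocorrelation}, parametrize diagonal $\O(n)$-orbits in $(\R^n)^2$ by Gram triples, and use collision-freeness to separate the off-diagonal contributions and read off $\mathcal{T}(A)$ together with the products $w_iw_j$ --- and you are in fact more careful than the paper about the coincidence $\overline{(t_i,t_j)}=\overline{(t_j,t_i)}$ when $\|t_i\|=\|t_j\|$, which the paper's proof glosses over by simply asserting that integrating $\mathbf{1}_{\overline{(t_i,t_j)}}$ against $m_2(x)$ returns $w_iw_j$. The concern you flag in your final paragraph about disentangling the individual squares $w_i^2$ when several $t_i$ share a magnitude is legitimate and is likewise not resolved in the paper's argument, but it is immaterial for the sequel: Theorem~\ref{thm.secondmoment} uses only $\mathcal{T}(A)$, and Corollary~\ref{cor.distinctweights} uses only the off-diagonal products $\{w_iw_j\}_{i<j}$, both of which you recover cleanly.
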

\begin{proof}
\rev{As explained in Section~\ref{sec.problem}, the second moment is the distribution on $\R^n \times \R^n$
$$m_2(x)(\tau_1, \tau_2) = \sum_{i,j =1}^k w_iw_j \mu_{t_i,t_j}(\tau_1,\tau_2),$$
where $\mu_{t_i,t_j}(\tau_1,\tau_2)$ is 
supported on the orbit $\overline{(t_i,t_j)}$. If we assume that the points $t_1, \ldots , t_k$ are collision free,
then the set of $\O(n)$-orbits $\{(t_i,t_j)\}_{i \leq j}$
are all distinct, so we can determine
this set of $\O(n)$-orbits from the support of
the distribution $m_2(x)(\tau_1,\tau_2)$ on $(\R^n)^2$,
and we can likewise determine the products of the weights 
$w_iw_j$ by integrating the characteristic function $\mathbf{1}_{\overline{(t_i,t_j)}}$ with respect to $m_2(x)(\tau_1, \tau_2)$.}
%On the other hand,
  %\nonumber&  =  \int_{\O(n)}  (g\cdot x)(\tau_1) (g\cdot x)(\tau_2) \nonumber\;dg =  \sum_{i,j} \int_{\O(n)} w_iw_j \delta_{g\cdot t_i}(\tau_1) (\delta_{g\cdot t_j})(\tau_2)\;dg.\nonumber
%	\end{align}
%	\oscar{one delta has parentheses, the other does not!}Observe that the product $(\delta_{g\cdot t_i})(\tau_1)(\delta_{g\cdot t_j})(\tau_2)=0$
%	unless $\tau_1 = g\cdot t_i$ and $\tau_2 = g\cdot t_j$; i.e., unless $(\tau_1,\tau_2)$ is in the
%	$G$-orbit of $(t_i,t_j)$. \oscar{or refer back to \eqref{def:m2_as_a_measure} Because of collision-freeness, the sum only has one term.}Therefore,
%	$m_2(x)$ viewed as a function \dan{measure} on \rev{$(\R^n \times \R^n)$}
%	is supported on the $\O(n)$-orbits of pairs $(t_i,t_j)$ for $1 \leq i,j ,\leq k$
%with value $w_iw_j$ on the corresponding orbit.
%        In particular, if we assume that the points $t_1, \ldots , t_k \in \R^n$ are collision-free, then the set of $\O(n)$-orbits $\{(t_i,t_j)\}_{i \leq j}$
 %       are all distinct, so we can determine
 %       this set of $\O(n)$-orbits from the support of
%        the function \dan{measure} $m_2(x)$ on $\rev{(\R^n)^2}$. 
On the other hand,
the $\O(n)$-orbit of the pair $(t_i,t_j)$ is uniquely determined
by the triple of real numbers
$(t_i \cdot t_i, t_j\cdot t_j, t_i \cdot t_j)$. Thus, if  $A = X^TX$ is the Gram matrix of $X$, then the orbit
$\overline{(t_i,t_j)}$ determines the triple of entries $(A_{ii}, A_{jj}, A_{ij})$ and the proposition follows.
\end{proof}        
Proposition \ref{prop.equiv} implies that if the support of $x$ is collision-free, then there are a finite number of $\O(n)$-orbits of $\delta$-functions $y$
with the same second moment. This follows since if $y = \sum_{i=1}^k w'_i \delta_{s_i}
\in \V_{n,k}$
is another $\delta$-function with the same second moment
as $x = \sum w_i \delta_{t_i}$, then
$\mathcal{T}(A) = \mathcal{T}(B)$ where $A = X^T X$ and $B = Y^T Y$, where
$Y = (s_1\;\ldots \; s_k)$.
We complete the proof of Theorem~\ref{thm.secondmoment}
by bounding this number. 

Since $A$ and $B$ have the same set of diagonal entries with multiplicity,
we can, after reordering the 
supports, assume that $\|t_i\| =\|s_i\|$ for all $i$
and that $\|t_i\| \leq \|t_{i+1}\|$ for $i = 1, \ldots, k$. By assumption, $\|t_i\|$
takes on $q$ different values as $i$ ranges from 1 to $k$.
Let $\mathcal{P}_1, \ldots ,\mathcal{P}_q$ be the partition of $\{1,\ldots,k\}$
such that $\|t_j\|$ is constant for all $j \in \mathcal{P}_\ell$ and $|\mathcal{P}_\ell| = r_\ell$.

The symmetric matrices
$A = X^TX$ and $B = Y^TY$ have the same set of entries and identical diagonals.
For any $i \leq j$, the triplet $(A_{ii}, A_{jj}, A_{ij})$ represents
\rev{an orbit} in the support of $m_{2}(x)$. Likewise, for any $k \leq l$, the triple 
$(B_{kk}, B_{ll}, B_{kl})$ \rev{represents an orbit} in the support of
$m_2(y)$. Since $x$ and $y$ have the same second moments, the
sets of triples $\{(A_{ii}, A_{jj}, A_{ij})\}_{i,j}$ and
$\{(B_{kk},B_{ll}, B_{kl})\}_{k,l}$ must be the same.
Let $B_{kl}$ be an entry of $B$ with $k \leq l$, and 
with $k \in \mathcal{P}_a, l \in \mathcal P_{b}$. 
Because we have ordered
the diagonal, we know that $a\leq b$ and that  if
	$B_{kl} = A_{ij}$ for some $i \leq j$, then $(i,j) \in \mathcal{P}_a \times \mathcal{P}_b$.
Hence, if $a <  b$, the entry $B_{kl}$ can take $r_a r_b$ possible values corresponding to pairs in the product $\mathcal{P}_a \times \mathcal{P}_b$.
On the other hand, if $a = b$, then $B_{kl}$ can take $\binom{r_a}{2}$ possible
values 
corresponding to pairs $i \leq j \in \mathcal{P}_a \times \mathcal{P}_a$.
Thus, there are at most $\prod_{r_p \geq 2} \binom{r_p}{2}! \prod_{a < b\leq q} (r_n r_m)!$
%\begin{equation*}
%	\prod_{r_p \geq 2} \binom{r_p}{2}! \prod_{a < b\leq q} (r_a r_b)!
%\end{equation*}
possible matrices $B = Y^TY$ corresponding to Gram matrices
of $\delta$-functions $y$ with the same second moment as $x$. However,
if we reorder the points $t_i$ with the same magnitude we do not change the diagonal and obtain a matrix $Y$ which is a permutation
of $X$, and the Gram matrix $B = Y^TY$ corresponds to 
a $\delta$-function $y$ with the same
$\O(n)$-orbit as $x$. Since there are $\prod_{p=1}^q r_p!$ reorderings
of the support set $t_1, \ldots, t_k$ which preserve the magnitudes, we see that there are at most $\prod_{r_p \geq 2} \frac{\binom{r_p}{2}!}{r_p!} \prod_{a < b \leq q} (r_a r_b)!$ 
%\begin{equation*}
	%\prod_{r_p \geq 3} \frac{\binom{r_p}{2}!}{r_p!} \prod_{a < b \leq q} (r_a r_b)!
%\end{equation*}
possible $\O(n)$-orbits of $\delta$-functions with the same second moment
as $x$.

\begin{remark}
  Note that the bound given in~\eqref{eq.bound} merely determines
  the number of permutations of $A = X^T X$, which could possibly be
  the Gram matrix of a matrix $Y$, which is not a permutation of
  $X$. However, in any example, the number of possible matrices $Y$ with
  $\mathcal{T}(Y^TY) = \mathcal{T}(X^T X)$ will be further limited by
    the fact that the permutation of $A$ must necessarily be a positive semidefinite matrix of rank $\ell$,  
    where $\ell$ is the dimension of the subspace spanned by the vectors in the support of the function $x$. However, we do not
    know a way to use this constraint to reduce the bound uniformly over all
    collision-free $x \in  \V_{n,k}$; see Example~\ref{ex.piccard}.
\end{remark}   

\subsection{Proof of Corollary \ref{cor.distinctweights}}
If $x = \sum_{i=1}^k w_i \delta_{t_i}$, then by Proposition~\ref{prop.equiv} the autocorrelation $m_2(x)(\tau_1,\tau_2)$ determines
the $\O(n)$-orbits $\overline{(t_i, t_j)}$
and the values of $w_i w_j$. In particular, if all pairwise products
$w_i w_j$ are distinct, then any $\delta$-function $y = \sum w_i' s_i$
with the same second moment as $x$
must have (after possibly reordering the support point $s_i$),
$w'_i = \pm w_i$ and $\overline{t_i, t_j}= \overline{s_i, s_j}$, 
since $\overline{t_i,t_j}$ is the unique orbit where the \rev{integral
with respect to $m_2(x)(\tau_1,\tau_2)$
of its characteristic function
is $w_i w_j$.}
In particular, if $X = (t_1
;\ldots \; t_k)$ and $Y = (s_1\; \ldots \; s_k)$, 
then for any pair $i \leq j$, $(X^TX)_{ij} = (Y^TY)_{ij}$; i.e., $X$ and $Y$
have the same Gram matrices so $x$ and $y$ are orthogonally equivalent.

\section{The beltway problem over orthogonal groups for binary signals}\label{sec:binary}
Theorem \ref{thm.secondmoment} implies that the $\O(n)$-orbit
 of any  $\delta$-function whose support is radially collision-free
 is uniquely determined by its second moment. Likewise, by Corollary~\ref{cor.distinctweights}, if the support of $x$ is only collision-free
 but the pairwise products of the weights are distinct, then 
 the $\O(n)$-orbit of $x$ is determined from the second
 moment. 
In this section, we discuss the problem of determining a \emph{binary} $\delta$-function in $\R^n$ from its autocorrelation over the orthogonal group $\O(n)$, when the support is collision-free but not radially collision-free.  
In this case, the situation is more nuanced, and we divide the discussion into four subsections. 
In Section~\ref{sec:binary_independent} we prove that if the support of $x$ consists
 of linearly independent points, then we cannot expect to recover
 the $\O(n)$-orbit of $x$ from its second moment. 
 In Sections \ref{sec:sphere} and~\ref{sec:sphere_non_independent} we discuss the case where the support set lies on a sphere. In Section~\ref{sec:sphere} we prove that 
if $k\leq n$, then we cannot determine
 the $\O(n)$-orbit of $x$ from its second moment. 
 Conversely, in Section~\ref{sec:sphere_non_independent}, we establish a conjecture that if $k>n$ then an $\O(n)$-orbit of a generic binary  $\delta$-function is determined by the second moment. 
Finally, in Section~\ref{sec:binary_turnpike} we discuss the connection of this problem with the \emph{turnpike} problem. 
 
 \subsection{Linearly independent supports} \label{sec:binary_independent}
In this section, we show that if the support of a binary $\delta$-function
$x$ consists of linearly independent points in $\R^n$, then we cannot expect to recover the $\O(n)$-orbit of $x$ from its second moment if at least two
of the points have the same magnitude.

\begin{proposition} \label{prop.indep}
Consider the set of $\O(n)$-orbits of
  binary  $\delta$-functions $x = \sum_{i}^k \delta_{t_i}$, 
  with $k \leq n$, such
  that $S = \{t_1, \ldots , t_k\}$ is collision-free and $\|t_i\| = \|t_j\|$
  for some $i \neq j$.   Then, there exists a 
  Zariski dense subset $\mathcal{W}$ 
such that for all $x \in \mathcal{W}$, the $\O(n)$-orbit
  of $x$ is not determined by its second moment.
\end{proposition}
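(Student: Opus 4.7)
The plan is to exhibit, for a Zariski dense set of admissible configurations $(t_1, \ldots, t_k)$, a second configuration $(y_1, \ldots, y_k) \in (\R^n)^k$ such that the $\O(n)$-orbit of $y = \sum_i \delta_{y_i}$ differs from that of $x = \sum_i \delta_{t_i}$ while both have the same second moment. By Proposition~\ref{prop.equiv}, under the collision-free hypothesis the second moment of a binary $\delta$-function is determined by the multiset of triples $\mathcal{T}(A) = \{(A_{ii}, A_{jj}, A_{ij})\}_{i < j}$ extracted from the Gram matrix $A = X^T X$; and two binary $\delta$-functions are $\O(n)$-equivalent precisely when their Gram matrices are related by simultaneous row/column permutation. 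Since $k \le n$, every $k \times k$ symmetric positive semidefinite matrix $B$ arises as $Y^T Y$ for some $Y \in \R^{n \times k}$. It therefore suffices to produce, for generic admissible $x$, a symmetric positive semidefinite $B$ with $\mathcal{T}(B) = \mathcal{T}(A)$ but $B \neq P^T A P$ for every permutation matrix $P$.

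After relabeling, assume $\|t_1\| = \|t_2\|$, so $A_{11} = A_{22}$, and pick two distinct indices $\ell, p \in \{3, \ldots, k\}$; this requires $k \ge 4$. Let $B$ be the symmetric matrix obtained from $A$ by interchanging only $A_{1\ell} \leftrightarrow A_{2\ell}$ (and $A_{\ell 1} \leftrightarrow A_{\ell 2}$). Since $A_{11} = A_{22}$, this swap exchanges precisely the two triples $(A_{11}, A_{\ell\ell}, A_{1\ell})$ and $(A_{22}, A_{\ell\ell}, A_{2\ell})$ inside $\mathcal{T}(A)$, so $\mathcal{T}(B) = \mathcal{T}(A)$. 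To rule out $B = P^T A P$ for some permutation $\sigma$: comparing the $(1, \ell)$ and $(1, p)$ entries gives $A_{\sigma(1)\sigma(\ell)} = A_{2\ell}$ and $A_{\sigma(1)\sigma(p)} = A_{1p}$; for generic $x$ (a Zariski open condition, compatible with collision-freeness) the off-diagonal entries of $A$ are pairwise distinct, forcing $\{\sigma(1), \sigma(\ell)\} = \{2, \ell\}$ and $\{\sigma(1), \sigma(p)\} = \{1, p\}$; but $\{2, \ell\} \cap \{1, p\} = \emptyset$, a contradiction.

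The main remaining obstacle is ensuring $B$ is positive semidefinite, for otherwise it does not arise as a Gram matrix. Writing $B = A + \Delta$ with $\Delta$ a symmetric perturbation supported in only four entries of magnitude $O(|A_{1\ell} - A_{2\ell}|)$, positive definiteness of $B$ is an open condition (strict positivity of the leading principal minors, which are polynomials in the $t_i$) that follows from the positive definiteness of $A$ whenever $\Delta$ is small. It therefore suffices to exhibit one admissible base configuration at which $B$ is positive definite. Take $t_1 = e_1$, $t_2 = \cos\theta\, e_1 + \sin\theta\, e_2$ for small $\theta > 0$, and let $t_3, \ldots, t_k$ be small generic perturbations of $e_3, \ldots, e_k$; then $\|t_1\| = \|t_2\|$, the $t_i$ are linearly independent, $A$ is close to the identity, $\Delta$ is small, $B$ is positive definite, and the distinctness and collision-free hypotheses also hold on a Zariski open neighborhood of this base point.

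Intersecting, inside the subvariety of $(\R^n)^k$ defined by $\|t_1\| = \|t_2\|$, the Zariski open loci where (i) the $t_i$ are linearly independent, (ii) $S$ is collision-free, (iii) $B$ is positive definite, and (iv) $B \neq P^T A P$ for every permutation $P$, yields a Zariski dense set of admissible configurations; its image in the orbit space is the desired $\mathcal{W}$. For each such $x$, factoring $B = Y^T Y$ with $Y \in \R^{n \times k}$ produces $y = \sum_i \delta_{y_i}$ sharing the second moment of $x$ but lying in a distinct $\O(n)$-orbit. (For $k = 3$ the construction collapses --- the single available swap coincides with the permutation $(1\,2)$ --- and indeed the $\O(n)$-orbit is determined by the second moment in that case, so the proposition is substantive only for $k \ge 4$.)
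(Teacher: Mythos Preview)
Your proof is correct and rests on the same core idea as the paper's: swap the two Gram-matrix entries $A_{1\ell}\leftrightarrow A_{2\ell}$ in a column indexed by the pair of equal-norm support points, so that $\mathcal{T}(B)=\mathcal{T}(A)$ while $B$ is not a simultaneous permutation of $A$. The execution differs in two respects worth noting. First, the paper realises $B$ concretely: after reducing to an upper-triangular $X$ it solves a linear system for the first $k-1$ coordinates of a new vector $s_k$ and a quadratic equation for the last, obtaining existence on the semi-algebraic set where $t_{kk}$ is large enough. You instead work abstractly at the level of Gram matrices, invoking that for $k\le n$ every positive semidefinite $k\times k$ matrix is a Gram matrix of vectors in $\R^n$, and then check positive definiteness of $B$ near an explicit base point; this is cleaner and avoids the triangular reduction. (Your base-point description is slightly imprecise---with $\theta$ small, $A_{12}=\cos\theta$ is close to $1$, so $A$ is not close to the identity and its least eigenvalue is of order $\theta^2$; taking $\theta$ bounded away from $0$, e.g.\ $t_2=e_2$, makes the argument immediate.) Second, you supply an explicit argument, using an auxiliary index $p$, that $B$ is not permutation-equivalent to $A$; the paper simply asserts this. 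Your argument forces $k\ge 4$, and you correctly observe that for $k=3$ the only available swap coincides with the permutation $(1\,2)$, so the construction---and in fact the proposition as stated---fails in that case; the paper's proof does not address this.
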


\begin{proof}
 Let $\mathcal{U}$ be the set of $\delta$-functions satisfying the hypotheses
 of the proposition. The subset of $\mathcal{U}' \subset \mathcal{U}$ corresponding
 to $\delta$-functions $x$ whose
supports are linearly independent is Zariski open and hence
 dense. Thus, it suffices to prove that there is a Zariski dense
 subset of $\mathcal{U}'$ such that if $x \in \mathcal{U}'$, the $\O(n)$-orbit of $x$ is not determined from its second moment.
 
 Suppose $x = \sum_{i=1}^k \delta_{t_i} \in \mathcal{U'}$.
 After reordering
  the points, we may assume that $\|t_1\| = \|t_2\|$. Also, 
 after applying a suitable
  orthogonal transformation, we may assume that the matrix
  $X = (t_1 \ldots t_k)$ is upper triangular; i.e., $t_i = (t_{i1}, \ldots ,t_{ii}, 0, \ldots ,0)^T$ for $i = 1, \ldots ,k$.
  In particular, we may assume that the last $n-k$ rows of $X$
  are zero. It follows that the $k \times k$ Gram matrix $X^TX$ is the same as the Gram matrix obtained by deleting the last $n-k$ rows
  of the triangular matrix $X$. In other words, we can, without loss of generality, assume that $k = n$.
  
  Now, consider matrices of the form $Y = (t_1\; \ldots t_{k-1}\; s_k)$.
  The Gram matrices $X^TX$ and $Y^TY$ differ only in the $k$-th row and column.
Consider the set of $s_k$, which satisfy the equations $t_1 \cdot s_k  =  t_2 \cdot t_k$, $t_2 \cdot s_k  =  t_1 \cdot t_k$, $t_i \cdot s_k  = t_i \cdot t_k$ for $i=3,\ldots,k-1$, and $s_k \cdot s_k  =  t_k \cdot t_k$.
%\begin{equation}
%    \begin{array}{ccc}
%    t_1 \cdot s_k & = & t_2 \cdot t_k\\
%    t_2 \cdot s_k & = & t_1 \cdot t_k\\
%    t_3 \cdot s_k & = & t_3 \cdot t_k \\
%    & \ldots & \\
%    t_{k-1} \cdot s_{k} & = & t_{k-1} \cdot t_k\\
%    s_k \cdot s_k & = & t_k \cdot t_k
%  \end{array}
%\end{equation}
Because the matrix $X = (t_1 \ldots t_{k-1})$ is triangular,
  if we write $s_k = (s_{k1} \ldots s_{kk})^T$, the first $k-1$ equations
  in the system above give a unique solution for
  $(s_{k1}, \ldots , s_{k,k-1})$. The last coordinate $s_{kk}$ is determined, up to a sign, by the equation $s_k \cdot s_k = t_k \cdot t_k$. However, this equation
  will not have a solution if $\sum_{i=1}^kt_{ki}^2 < \sum_{l=1}^{k-1} s_{kl}^2$. However,
  since $s_{k1}, \ldots s_{k,k-1}$ do not depend on $t_{kk}$, we know that
  for fixed $t_{k1}, \ldots , t_{k,k-1}$ and $t_{kk}$ sufficiently large
  the system will have a solution. Moreover, this solution is unique up to the action of $\O(k)$.
 By construction, if $Y = (t_1\; \ldots t_{k-1}\; s_k)$, 
  $\mathcal{T}(Y^TY) = \mathcal{T}(X^TX)$ but the Gram matrix $Y^TY$ cannot
    be obtained from a permutation of $X$. Hence,
    if $y =\delta_{t_1} + \ldots +\delta_{t_{k-1}} + \delta_{s_k}$, then
    $m_2(y)= m_2(x)$ but $y$ is not orthogonally equivalent to $x$. 
\end{proof}

\subsection{$\delta$-functions supported on spheres with $k\leq n$} \label{sec:sphere}
Let us consider binary $\delta$-functions of the form $\sum_{i=1}^k \delta_{t_i}$,
where $S = \{t_1, \ldots , t_k\}$ is a collision-free subset of points
in $S^{n-1}$. By Theorem~\ref{thm.secondmoment},
we know that there can be up to $\frac{\binom{k}{2}!}{k!}$ possible
non-orthogonally equivalent 
$\delta$-functions  with the same second moment 
as $x$ (including $x$).
When $k \leq n$, then a general
collection of points on $S^{n-1}$ are linearly independent and 
the method used in the proof of Proposition~\ref{prop.indep} yields
the following result that bounds the number of solutions from below. 

\begin{proposition}
Consider the set
of binary $\delta$-functions on the sphere whose support is collision-free and assume $3<k \leq n$. Then, there is a Zariski dense subset $\mathcal{W}$ so that if $x \in \mathcal{W}$, then there are at least
$(k-1)!$ non-equivalent binary $\delta$-functions with the same
second moment as $x$ (including $x$).
\end{proposition}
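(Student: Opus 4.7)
The plan is to extend the single-vector-replacement construction of Proposition~\ref{prop.indep} to the sphere, producing $(k-1)!$ pairwise inequivalent $\delta$-functions indexed by permutations $\sigma \in S_{k-1}$. Fix a generic $\delta$-function $x = \sum_{i=1}^k \delta_{t_i}$ on $S^{n-1}$ with collision-free support; generically the collision-free condition makes the off-diagonal entries of $A = X^T X$ all distinct and, since $k \leq n$, the points $t_1, \ldots, t_k$ may be taken linearly independent. For each $\sigma \in S_{k-1}$, I will construct $s_k^\sigma \in S^{n-1}$ solving the linear system $t_i \cdot s_k^\sigma = t_{\sigma(i)} \cdot t_k$ for $i = 1, \ldots, k-1$. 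With $T = (t_1\;\ldots\; t_{k-1})^T$ of full rank $k-1$ and $v^\sigma = (t_{\sigma(i)} \cdot t_k)_{i=1}^{k-1}$, the component of $s_k^\sigma$ in $\Span(t_1, \ldots, t_{k-1})$ is forced to equal the pseudoinverse solution $T^+ v^\sigma$, and a unit-norm extension exists whenever $\|T^+ v^\sigma\|^2 \leq 1$, the orthogonal complement having dimension $n - k + 1 \geq 1$ to absorb the rest.

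To show that the locus $\mathcal{W}$ on which all $(k-1)!$ inequalities $\|T^+ v^\sigma\|^2 \leq 1$ hold simultaneously is Zariski dense, it suffices to exhibit a nonempty Euclidean-open subset of strict validity, since such an open set is Zariski dense in the irreducible parameter variety of $k$-tuples on the sphere. Taking $t_1, \ldots, t_{k-1}$ close to the orthonormal frame $e_1, \ldots, e_{k-1}$ and $t_k = \sum_{i=1}^{k-1} \epsilon_i e_i + \sqrt{1 - \sum_i \epsilon_i^2}\, e_n$ with small pairwise distinct $\epsilon_i$, at the unperturbed frame one computes $\|T^+ v^\sigma\|^2 = \sum_i \epsilon_i^2 < 1$ uniformly in $\sigma$, and the strict inequality persists under small perturbations. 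Setting $y^\sigma = \sum_{i=1}^{k-1} \delta_{t_i} + \delta_{s_k^\sigma}$, the Gram matrix $B^\sigma$ coincides with $A$ on its upper-left $(k-1) \times (k-1)$ block and its last column is a $\sigma$-permutation of the last column of $A$, so $\mathcal{T}(B^\sigma) = \mathcal{T}(A)$ and hence $m_2(y^\sigma) = m_2(x)$ by Proposition~\ref{prop.equiv}.

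The main difficulty is showing that these $(k-1)!$ signals are pairwise non-$\O(n)$-equivalent for generic $x$. Equivalence of $y^\sigma$ and $y^\tau$ forces $B^\tau = P B^\sigma P^T$ for some permutation matrix $P$ on $\{1, \ldots, k\}$. For generic $x$, the ``internal'' values $\{A_{ij} : i < j \leq k-1\}$ and ``external'' values $\{A_{ik} : i < k\}$ form two disjoint sets of generic reals, and an off-diagonal entry $B^\sigma_{ab}$ is external iff $k \in \{a, b\}$. Conjugation by $P$ preserves entry values and hence the internal/external classification; since $k$ is the unique row/column index incident to every external entry, $P$ must fix $k$. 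The restriction $\pi = P|_{\{1, \ldots, k-1\}}$ then satisfies $A_{\pi(i)\pi(j)} = A_{ij}$ for all $i, j < k$, which together with distinctness of internal entries forces $\pi = \mathrm{id}$. Finally, matching $B^\sigma_{ik} = B^\tau_{ik}$ gives $A_{\sigma(i), k} = A_{\tau(i), k}$, and distinctness of external values forces $\sigma = \tau$, completing the argument.
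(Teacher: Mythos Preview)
Your argument is correct and is precisely the detailed realization of what the paper only sketches: the paper states that ``the method used in the proof of Proposition~\ref{prop.indep} yields'' this proposition and offers no further proof. You have carried out that method in full, replacing the single transposition used in Proposition~\ref{prop.indep} by an arbitrary $\sigma\in S_{k-1}$, exhibiting a Euclidean-open (hence Zariski dense) region near an orthonormal frame where all $(k-1)!$ systems are solvable on the sphere, and supplying the pairwise non-equivalence argument via the internal/external entry dichotomy. One small remark: on the sphere the collision-free hypothesis is \emph{equivalent} to distinctness of the off-diagonal Gram entries, so your genericity assumptions in the last paragraph are already contained in the collision-free hypothesis, and no extra genericity beyond collision-freeness is needed for the non-equivalence step. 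Also note that for $\sigma=\mathrm{id}$ one may take $s_k^{\mathrm{id}}=t_k$, so $x$ itself is among the $(k-1)!$ constructed functions.
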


\begin{example}
We performed the following numerical experiment with 
$k = n =4$.
We constructed  $10,000$
random $4 \times 4$ triangular matrices $X = (t_1\;t_2\;t_3\;t_4)$ with 
each $\|t_i\| = 1$ as follows.
We took $t_1 = (1,0,0,0)^T$, $t_2 = (t_{21},t_{22},0,0)^T$ with
$(t_{21},t_{22})$ uniformly (Haar) sampled on $S^1$,
$t_3  = (t_{31},t_{32}, t_{33},0)^T$ with
$(t_{31},t_{32}, t_{33})^T$  uniformly sampled on $S^2$,
and $t_4 = (t_{41}, t_{42}, t_{43}, t_{44})^T$ uniformly sampled
on $S^3$. We found 
that in approximately $14\%$ of the
sampled matrices, any non-trivial permutation of the first three entries of the 
fourth column of the Gram matrix $X^TX$ yields the Gram
matrix of another triangular matrix $(t_1\; t_2 \;t_3 \;s_4)$
with $\|s_4\|=1$;
this yields
$\delta$-functions $y = \delta_{t_1} + \delta_{t_2} + \delta_{t_3} + \delta_{s_4}$ supported on $S^3$ which are not orthogonally equivalent to $x = \sum_{i=1}^4 \delta_{t_i}$. Note that Theorem~\ref{thm.secondmoment}
implies that there are at most $6!/4! = 30$ possible $\delta$-functions
with the same second moment as $x = \delta_{t_1} + \delta_{t_2} + \delta_{t_3} + \delta_{t_4}$ (including $x$). 
\end{example}

\subsection{{$\delta$-functions supported on spheres with $k>n$}} \label{sec:sphere_non_independent}

When $k > n$, the support of a $\delta$-function consists
of linearly dependent points. In this case, we pose
the following conjecture.
\begin{conjecture} \label{question.generic}
    Suppose that $k > n$. Then, the  $\O(n)$-orbit of a generic binary  $\delta$-function $x = \sum_{i=1}^k\delta_{t_i}$ with support
    on the sphere $S^{n-1}$ is determined by its second moment. 
\end{conjecture}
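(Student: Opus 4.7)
The plan is to reduce the question to a statement about Gram matrices and then exploit the rank constraint imposed by $k>n$. By Proposition~\ref{prop.equiv}, the second moment of a collision-free binary $\delta$-function $x = \sum_{i=1}^k \delta_{t_i}$ with $t_i\in S^{n-1}$ determines the multiset $\mathcal{T}(A)$, where $A = X^T X$; since each diagonal entry $A_{ii}$ equals $1$, this data is equivalent to the multiset $M(A) = \{A_{ij}\}_{i<j}$ of off-diagonal entries. The conjecture then amounts to the claim that, for a generic such $A$, the $\Sigma_k$-orbit of $A$ under simultaneous row-column permutations is uniquely determined by $M(A)$.

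Let $\mathcal{M}_{n,k} \subset \R^{\binom{k}{2}}$ denote the variety of off-diagonal vectors of symmetric positive semidefinite $k\times k$ matrices with unit diagonal and rank at most $n$. This is an irreducible variety of dimension $k(n-1)-\binom{n}{2}$, cut out set-theoretically by the $(n+1)\times(n+1)$ minors, and since $k>n$, its codimension $\binom{k-n+1}{2}$ in the ambient $\binom{k}{2}$-dimensional space is strictly positive. The symmetric group $\Sigma_{\binom{k}{2}}$ acts on this ambient space by permuting coordinates, with the subgroup $\Sigma_k$ of row-column permutations preserving $\mathcal{M}_{n,k}$. For each coset $[\sigma] \in \Sigma_{\binom{k}{2}}/\Sigma_k$, set
$$V_\sigma := \{A \in \mathcal{M}_{n,k} : \sigma(A) \in \mathcal{M}_{n,k}\}.$$
If each $V_\sigma$ with $[\sigma] \neq [e]$ is a proper subvariety of $\mathcal{M}_{n,k}$, then their finite union is proper, and its Zariski-dense open complement consists of configurations whose $\O(n)$-orbit is recoverable from the second moment.

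Since $\sigma$ acts linearly, $\sigma(\mathcal{M}_{n,k})$ is an irreducible subvariety of the same dimension as $\mathcal{M}_{n,k}$, and because both are irreducible of the same dimension, $V_\sigma = \mathcal{M}_{n,k} \cap \sigma^{-1}(\mathcal{M}_{n,k})$ is a proper subvariety if and only if $\sigma(\mathcal{M}_{n,k}) \neq \mathcal{M}_{n,k}$. The heart of the argument, and what I expect to be the main obstacle, is to show that $\mathcal{M}_{n,k}$ admits no \emph{accidental} coordinate symmetries beyond $\Sigma_k$: for each $\sigma \in \Sigma_{\binom{k}{2}} \setminus \Sigma_k$, one must exhibit a single matrix $A \in \mathcal{M}_{n,k}$ whose permutation $\sigma(A)$ has rank strictly greater than $n$. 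I would attack this by constructing explicit witness configurations for each class of $\sigma$, for instance by taking a highly symmetric seed configuration (such as a union of lower-dimensional spherical arrangements or a perturbation of a regular simplex) and verifying that a generic perturbation breaks the spurious equalities among off-diagonal entries that any non-trivial $\sigma$ would require; alternatively, by an invariant-theoretic analysis of the stabilizer in $\Sigma_{\binom{k}{2}}$ of the ideal generated by the $(n+1)\times(n+1)$ minors. The increasing codimension $\binom{k-n+1}{2}$ of $\mathcal{M}_{n,k}$ as $k$ exceeds $n$ provides the algebraic rigidity motivating the conjecture, but making this quantitative for every non-trivial $\sigma$ will likely require a combinatorial case analysis or an inductive structure, perhaps leveraging the turnpike-type reductions discussed in Section~\ref{sec:binary_turnpike}.
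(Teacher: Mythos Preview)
The statement you are attempting to prove is labeled a \emph{Conjecture} in the paper, and the paper does not supply a proof; it only cites the Boutin--Kemper result on recovering point configurations from pairwise distance multisets as supporting evidence. So there is no ``paper's proof'' against which to compare.

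Your reduction is sound: on the sphere the data $\mathcal{T}(A)$ collapses to the multiset of off-diagonal Gram entries, and the question becomes whether any $\sigma\in\Sigma_{\binom{k}{2}}\setminus\Sigma_k$ can carry a generic point of $\mathcal{M}_{n,k}$ back into $\mathcal{M}_{n,k}$. Your dimension count $\dim\mathcal{M}_{n,k}=k(n-1)-\binom{n}{2}$ and codimension $\binom{k-n+1}{2}$ are correct, and the irreducibility (over $\R$, for $n\geq 2$) follows from the connectedness of $(S^{n-1})^k$. One small correction: to exclude $\sigma(A)$ from $\mathcal{M}_{n,k}$ it is not enough to force $\mathrm{rank}\,\sigma(A)>n$; you could equally well exhibit an $A$ for which $\sigma(A)$ fails to be positive semidefinite (as in Example~\ref{ex.piccard}).

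The genuine gap is exactly the step you flag as ``the main obstacle'': you have not shown that the coordinate symmetry group of $\mathcal{M}_{n,k}$ inside $\Sigma_{\binom{k}{2}}$ is precisely $\Sigma_k$. Your proposed attacks (witness configurations, stabilizer of the minor ideal) are reasonable, but neither is carried out, and the number of cosets $\binom{k}{2}!/k!$ grows super-exponentially, so a case-by-case witness construction would need a uniform mechanism. This is the same obstruction that leaves the statement a conjecture in the paper; your write-up is a clean reformulation of what needs to be proved rather than a proof.
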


If $x = \sum_{i=1}^k \delta_{t_i}$ is supported on $S^n$ and 
has a collision-free support,  
then the second moment determines all pairwise inner products 
$t_i \cdot t_j$. Since $\|t_i\|=1$ for all $i$, this is equivalent to knowing the pairwise distance $\|t_i -t_j\|$. Thus, the problem of recovering
the $\O(n)$-orbit of this $\delta$-function is equivalent to recovering the $\O(n)$-orbit of $S = \{t_1, \ldots t_k\}$ from their pairwise distances. 
The problem of determining a set of points in $\R^n$
from their pairwise distances (up to rigid motions) was studied in~\cite{boutin2007point}, where the authors prove that if $k \geq n+2$, then 
there is a hypersurface in $\Sym^k\R^n$ such that any
set of $k$ points in the complement of this hypersurface
can be recovered, up to a rigid motion, from its set of pairwise distances. 
This result can be viewed as giving evidence that Conjecture~\ref{question.generic} has a positive answer, at least when $k \geq n+2$.

\subsection{$\delta$-functions supported on $S^1$ and the turnpike problem} \label{sec:binary_turnpike}

The {\em turnpike problem} is the problem of recovering a set of points
$S = \{a_1, \ldots , a_k\}\subset \R$, 
up to a translation and reflection, from their pairwise distances
$|a_i -a_j|$. 
 This problem arises in multiple applications, including angle-of-arrival estimation~\cite{karanam2018magnitude}, identifiability of quantum systems~\cite{burgarth2014identifiability}, protein sequencing~\cite{acharya2010reconstructing,acharya2014quadratic,acharya2015string}, error-correcting codes for polymer-based data storage~\cite{gabrys2020mass} and  DNA mapping~\cite{karp1993mapping,skiena1994partial}. In the latter, the turnpike problem is known in the literature as the partial digest problem.
The turnpike problem is equivalent to 
the problem of recovering a sparse signal from its aperiodic autocorrelation, which has been studied in depth in the phase retrieval literature (but is different from the crystallographic phase retrieval problem)~\cite{ranieri2013phase,jaganathan2016phase,bendory2017fourier}. 
In~\cite{skiena1990reconstructing}, the maximum possible number $H(k)$ of  non-equivalent and homometric sets of $k$ points in $\mathbb{R}$ was bounded 
by $ \frac{1}{2}k^{0.8107144} \leq H(k) \leq \frac{1}{2}k^{1.2324827}$. 
If the support is collision-free, then the difference set always determines the points (up to a shift and reflection), with the exception of when the points $t_i$ belong to an explicit $2$-dimensional subspace of points when $k=6$~\cite{bekir2007there,bloom1977counterexample,bloom1977applications,piccard1939ensembles,ranieri2013phase}. 

If $M$ is at least the maximum distance between the points
in $S$, then the set $S$ can be embedded in the half-circle by
the map $a_i \mapsto t_i = (\cos (\pi a_i/M), \sin(\pi a_i/M))$.
This follows since the $t_i$'s lie in a half circle $0 \leq \pi(|a_i - a_j|/M) \leq \pi$
and the 
inner product $t_i \cdot t_j= \cos(\pi(a_i-a_j)/M)$ determines
$|a_i -a_j|$. Thus, the problem of recovering $S = \{t_1, \ldots ,t_k\}$
from its pairwise distances 
is equivalent to the problem of recovering the $\O(2)$-orbit of the binary $\delta$-function
$x = \sum_{i=1}^k \delta_{t_i}$ from its second moment.

Note that the set $S= \{a_1, \ldots , a_k\}\subset \R$ is collision-free
if and only if the corresponding set $T = \{t_1, \ldots ,t_k\} \subset S^1$ is collision-free. The result of~\cite{bekir2007there} implies
that the only collision-free subsets of the real line that cannot
be determined, up to a rigid motion, from their pairwise differences 
are six element sets of the form 
$P = \{0, a, b - 2a, 2b- 2a, 2b, 3b -a\}$
or
$Q = \{0, a, 2a + b, a + 2b, 2b - a, 3b - a\},$
where $a,b$ are real numbers. In this case, the sets $P,Q$ have the same
difference sets but are not equivalent under rigid transformations.
Translated to $S^1$, this implies that the only $\O(2)$-orbits of binary $\delta$-functions with collision-free supports lying in a half-circle which cannot be determined from their second moments are the pairs of 
$\delta$ functions $x = \sum_{t_i \in T} \delta_{t_i}$ and 
$y = \sum_{s_i \in S} \delta_{s_i}$, where 
 $S = \{( \cos(t/2M),\sin(t/2M))\}_{t \in P}$
and $T = \{ (\cos (t/2M), \sin(t/2M))\}_{t \in Q}$.

\begin{example} \label{ex.piccard}
The sets of integer points $P= \{0, 1, 8,11,13,17\}$ and $Q=\{0,1,4,10,12,17\}$
are collision-free and have the same difference sets, but are not 
equivalent. If we embed these points in the half-circle
by the map $\Z \to S^1$, $n \mapsto (\cos(\pi n/17), \sin(\pi n/17))$, 
then we obtain two sets of points $S = \{s_1, \ldots , s_6\}$
and $T = \{t_1, \ldots , t_6\}$ such that $x= \sum_{i=1}^6 \delta_{t_i}$
and $y = \sum_{i=1}^6 \delta_{s_i}$ are not orthogonally
equivalent but have the same second moments. With this ordering of the points, 
the corresponding Gram matrices are 
{\setlength{\arraycolsep}{3.6pt}
{\scriptsize{
%\resizebox{\textwidth}{!}{
\begin{align*}
   A = \left(
\begin{array}{cccccc} 
 1.0 & 0.98 & 0.74 & -0.27 & -0.60 & -1.0 \\
 0.98 & 1. & 0.85 & -0.092 & -0.45 & -0.98 \\
 0.74 & 0.85 & 1. & 0.45 & 0.092 & -0.74 \\
 -0.27 & -0.092 & 0.45 & 1. & 0.93 & 0.27 \\
 -0.60 & -0.45 & 0.092 & 0.93 & 1. & 0.60 \\
 -1.0 & -0.98 & -0.74 & 0.27 & 0.60 & 1.0 \\
\end{array}
\right),  \quad 
B =\left(
\begin{array}{cccccc}
 1.0 & 0.98 & 0.092 & -0.45 & -0.74 & -1.0 \\
 0.98 & 1. & 0.27 & -0.27 & -0.60 & -0.98 \\
 0.092 & 0.27 & 1. & 0.85 & 0.60 & -0.092 \\
 -0.45 & -0.27 & 0.85 & 1. & 0.93 & 0.45 \\
 -0.74 & -0.60 & 0.60 & 0.93 & 1. & 0.74 \\
 -1.0 & -0.98 & -0.092 & 0.45 & 0.74 & 1.0 \\
\end{array}
\right)
\end{align*} 
}}}
which have the same set of entries, but one cannot be obtained
from the other by the action of the permutation group $S_6$ which simultaneously permutes rows and columns.

By Theorem \ref{thm.secondmoment}, there are at most $14!/6!\approx 1.21\times 10^8$ possible $\delta$-functions with the same second moment. However, in the example with six points on $S^1$ we actually only obtain two such functions. The reason is that in this case, there is only one non-trivial permutation of the Gram matrix $X^TX$ that can be factored as $Y^TY$,  
where $Y$ is a $2 \times 6$ matrix. Indeed, a general permutation
of $X^TX$ need not be semi-definite nor have rank 2.
For example, the matrix
{\scriptsize{
$$C = \left(
\begin{array}{cccccc}
 1.0 & 0.74 & 0.98 & -0.27 & -0.60 & -1.0 \\
 0.74 & 1. & 0.85 & -0.092 & -0.45 & -0.98 \\
 0.98 & 0.85 & 1. & 0.45 & 0.092 & -0.74 \\
 -0.27 & -0.092 & 0.45 & 1. & 0.93 & 0.27 \\
 -0.60 & -0.45 & 0.092 & 0.93 & 1. & 0.60 \\
 -1.0 & -0.98 & -0.74 & 0.27 & 0.60 & 1.0 \\
\end{array}
\right)$$ }}
has eigenvalues $\{3.9,2.1,0.30,-0.28,0,0\}$, so it is not positive semi-definite and has rank 4.
\end{example}

\begin{remark} The result of \cite{bekir2007there} is only relevant for
binary $\delta$-functions with collision 
free support lying in a half circle.
We expect that there are other examples of binary $\delta$-functions with collision-free
support on $S^1$ which cannot be recovered from their second moments. 
\end{remark}

\section*{Acknowledgments}
The authors are grateful to Tanya Christiansen for helpful discussions, as well as to the referee for finding an error in an earlier version of the manuscript. T.B.  and D.E. are supported by the BSF grant no. 2020159. 
T.B. is also supported in part by the NSF-BSF grant no. 2019752, and the ISF grant no.1924/21.
D.E. was also supported by NSF-DMS 2205626.
O.M. was supported by the ARO W911NF-17-1-0512 and
the Simons Foundation Math + X Investigator award 506976.

%\bibliographystyle{siamplain}
%\bibliography{references}

\bibliographystyle{elsarticle-num}

\end{document}